\documentclass[runningheads]{llncs}

\usepackage{graphicx}
\usepackage{wrapfig}
\usepackage{amssymb}
\usepackage{amsmath}
\usepackage{latexsym}
\usepackage{graphicx}
\usepackage{booktabs}
\usepackage{float}
\usepackage{comment}
\usepackage[shortlabels]{enumitem} 
\usepackage[utf8]{inputenc} 
\usepackage{listings}
\usepackage[dvipsnames]{xcolor}
\usepackage[noend]{algpseudocode}
\usepackage{algorithm,algorithmicx}
\usepackage{url}
\usepackage[font=footnotesize]{caption} 
\usepackage[misc]{ifsym} 



\newcommand{\Gen}{\textsc{Gen}}
\newcommand{\RevGen}{\textsc{RevGen}}

\newcommand{\LIST}[1]{\mathcal{G}_{#1}}
\newcommand{\REVLIST}[1]{\mathcal{R}_{#1}}
\newcommand{\trees}[1]{\mathbf{T}_{#1}}
\newcommand{\spt}[1]{\ensuremath{t_{#1}}}


\algrenewcommand\algorithmicrequire{\textbf{Precondition:}}
\algrenewcommand\algorithmicensure{\textbf{Postcondition:}}


\definecolor{verbgray}{gray}{0.9}
\lstnewenvironment{code}{%
  \lstset{backgroundcolor=\color{verbgray},
 frame=single,
  language=C,
  framerule=0pt,
  basicstyle=\ttfamily,
  showstringspaces=false,
  commentstyle=\color{blue}\textit,
  keywordstyle=\color{black}\bf,
  numbers=none,
  keepspaces=true,
  columns=fullflexible}}{}


\begin{document}
\title{A Pivot Gray Code Listing for the Spanning Trees of the Fan Graph}

\author{ Ben Cameron
\and
Aaron Grubb $^{\textrm{\Letter}}$
\and
Joe Sawada
}

\authorrunning{B. Cameron et al.}
%
\institute{University of Guelph, Guelph, Canada 
\email{\{ben.cameron,agrubb,jsawada\}@uoguelph.ca}
}
\maketitle   

\begin{abstract}
\vspace{-0.2in}
We use a greedy strategy 
to list the spanning trees of the fan graph, $F_n$, such that successive trees differ by pivoting a single edge around a vertex. It is the first greedy algorithm for exhaustively generating spanning trees using such a minimal change operation. The resulting listing is then studied to find a recursive algorithm that produces the same listing in $O(1)$-amortized time using $O(n)$ space.  Additionally, we present $O(n)$-time algorithms for ranking and unranking the spanning trees for our listing; an improvement over the generic $O(n^3)$-time algorithm for ranking and unranking spanning trees of an arbitrary graph.

\keywords{spanning tree  \and greedy algorithm \and fan graph \and combinatorial generation.}

\end{abstract}

\section{Introduction} \label{sec:intro}

This paper is concerned with the algorithmic problem of listing all spanning trees of the fan graph. Applications of efficiently listing all spanning trees of general graphs are ubiquitous in computer science and also appear in many other scientific disciplines \cite{chakraborty}. In fact, one of the earliest known works on listing all spanning trees of a graph is due to the German physicist Wilhelm Feussner in 1902 who was motivated by an application to electrical networks \cite{1902}. In the 120 years since Feussner's work, many new algorithms have been developed, such as those in the following citations \cite{berger,Char,cummins,gabow,hakimi,holzmann,kamae,kapoor,kishi,matsui,mayeda,minty,Shioura1995,uno,smith1997generating,winter}.

For any application, it is desirable for spanning tree listing algorithms to have the asymptotically best possible running time, that is, $O(1)$-amortized running time. The algorithms due to Kapoor and Ramesh \cite{kapoor}, Matsui \cite{matsui}, Smith \cite{smith1997generating}, Shioura and Tamura \cite{Shioura1995} and Shioura et al. \cite{uno}  all run in $O(1)$-amortized time. Another desirable property of such listings is to have the \emph{revolving-door} property, where successive spanning trees differ by the addition of one edge and the removal of another. Such listings where successive objects in a listing differ by a constant number of simple operations are more generally known as \textit{Gray codes}. The algorithms due to Smith \cite{smith1997generating}, Kamae \cite{kamae}, Kishi and Kajitani \cite{kishi}, Holzmann and Harary \cite{holzmann} and Cummins \cite{cummins} all produce Gray code listings of spanning trees for an arbitrary graph.  Of all of these algorithms, Smith's is the only one that produces a Gray code listing in $O(1)$-amortized time.  A stronger notion of a Gray code for spanning trees is where the revolving-door makes strictly local changes.  More specifically, we would like the differing edges to share a common endpoint.  Such a Gray code property, which we call a \textit{pivot Gray code}, is not given by any previously known algorithm. This leads to our first research question.

\begin{quote} \small
{\bf Research Question \#1} Given a graph $G$ (perhaps from a specific class), does there exist a pivot Gray code listing of all spanning trees of $G$? Furthermore, can the listing be generated in polynomial (ideally constant) time per tree using polynomial space?
\end{quote}

A related question that arises for any listing is how to \emph{rank}, that is, find the position of the object in the listing,  and \emph{unrank}, that is, return the object at a specific rank. For spanning trees, an $O(n^3)$-time algorithm for ranking and unranking a spanning tree of a specific listing for an arbitrary graph is known~\cite{colbourn1989unranking}. 

\begin{quote} \small
{\bf Research Question \#2} Given a graph $G$ (perhaps from a specific class), does there exist a (pivot Gray code) listing of all spanning trees of $G$ that can be ranked and unranked in $O(n^2)$ time or better? 
\end{quote}

An algorithmic technique recently found to have success in the discovery of Gray codes is the greedy approach.  An algorithm is said to be \emph{greedy} if it can prioritize allowable actions according to some criteria, and then choose the highest priority action that results in a unique object to obtain the next object in the listing.
When applying a greedy algorithm, there is no backtracking; once none of the valid actions lead to a new object in the set under consideration, the algorithm halts, even if the listing is not exhaustive.  The work by Williams~\cite{williams2013greedy} notes that some very well-known combinatorial listings can be constructed greedily, including the binary reflected Gray code (BRGC) for binary strings, the plain change order for permutations,  and the lexicographically smallest de Bruijn sequence. 
Recently, a very powerful greedy algorithm on permutations (known as Algorithm J, where J stands for ``jump'') generalizes many known combinatorial Gray code listings including many related to permutation patterns, rectangulations, and elimination trees~\cite{MUTZE2020,MUTZEHoang2019,MUTZERectangulations2021}. However, no greedy algorithm was previously known to list the spanning trees of an arbitrary graph.

\begin{quote}  \small
{\bf Research Question \#3} Given a graph $G$ (perhaps from a specific class), does there exist a greedy strategy to list all spanning trees of $G$?  Moreover, does such a greedy strategy exist where the resulting listing is a pivot Gray code? 
\end{quote}

\noindent
In most cases, a greedy algorithm requires exponential space to recall which objects have already been visited in a listing. Thus, answering this third question would satisfy only the first part of {\bf Research Question \#1}. However, in many cases, an underlying pattern can be found in a greedy listing which can result in space efficient algorithms~\cite{MUTZE2020,williams2013greedy}. 

To address these three research questions, we applied a variety of greedy approaches to structured classes of graphs including the fan, wheel, $n$-cube, and the compete graph.  From this study, we were able to affirmatively answer each of the research questions for the fan graph.  It remains an open question to find similar results for other classes of graphs.

\subsection{New Results} \label{sec:results}

The \textit{fan graph} on $n$ vertices, denoted $F_n$, is obtained by joining a single vertex (which we label $v_\infty$) to the path on $n-1$ vertices (labeled $v_2, ... , v_n$) -- see Fig.~\ref{fig:F5}.   
\begin{wrapfigure}[7]{r}{0.35\textwidth}
\begin{center}
  \vspace*{-0.5cm}
  \hspace{-0.6cm}
  \includegraphics[scale=0.35, trim=0 0 0 0cm, clip]{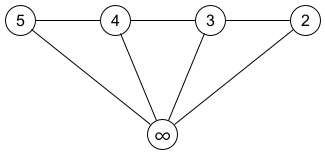}
  \caption{$F_5$}
  \vspace{-0.2cm}
  \label{fig:F5}
\end{center}
\end{wrapfigure}
Note that we label the smallest vertex $v_2$ so that the largest non-infinity labeled vertex equals the total number of vertices. Let $\trees{n}$ denote the set of all spanning trees of $F_n$.
We discover a greedy strategy to generate $\trees{n}$ in a pivot Gray code order. We describe this greedy strategy in Section 2. The resulting listing is studied to find an $O(1)$-amortized time recursive algorithm that produces the same listing using only $O(n)$ space, which is presented in Section 3. We also show how to rank and unrank a spanning tree of the greedy listing in $O(n)$ time in Section 3, which is a significant improvement over the general $O(n^3)$-time ranking and unranking that is already known. We conclude with a summary in Section 4. 


\section{A Greedy Generation for $\trees{n}$}

With our goal to discover a pivot Gray code listing of $\trees{n}$, we tested a variety of greedy approaches.  
There are two important issues when considering a greedy approach to list spanning trees: (1) the labels on the vertices (or edges) and (2) the starting tree. For each of our approaches, we prioritized our operations by first considering which vertex $u$ to pivot on, followed by an ordering of the endpoints considered in the addition/removal.  We call the vertex $u$ the \emph{pivot}.  

Our initial attempts focused only on pivots that were leaves.   As a specific example, we ordered the leaves (pivots) from smallest to largest.  Since each leaf $u$ is attached to a unique vertex $v$ in the current spanning tree, we then considered the neighbours $w$ of $u$ in increasing order of label.
We restricted the labeling of the vertices to the most natural ones, such as the one presented in Section~\ref{sec:results}.  For each strategy we tried all possible starting trees. Unfortunately, none of our attempts lead to exhaustive listings.  Applying these strategies on the wheel, $n$-cube, and complete graph was also unsuccessful.

By allowing the pivot to be any arbitrary vertex, we experimentally discovered several exhaustive listings for $\trees{n}$ for $n$ up to 12 (testing every starting tree for $n=12$ took about eight hours).  One listing stood out as having an easily defined starting tree as well as a nice pattern which we could study to construct the listing more efficiently.  It applied the labeling of the vertices as described in  Section~\ref{sec:results} with the following prioritization of pivots and their incident edges: 
\begin{quote}  
Prioritize the pivots $u$ from smallest to largest and then for each pivot, prioritize the edges $uv$ that can be removed from the current tree in increasing order of the label on $v$, and for each such $v$, 
prioritize the edges $uw$ that can be added to the current tree in increasing order of the label on $w$.
\end{quote}
Since this is a greedy strategy, if an edge pivot results in a spanning tree that has already been generated or a graph that is not a spanning tree, then the next highest priority edge pivot is attempted. Let \textsc{Greedy}$(T)$ denote the listing that results from applying this greedy approach starting with the spanning tree $T$. The starting tree that produced a nice exhaustive listing was the path $v_\infty, v_2, v_3, \ldots, v_n$, denoted $P_n$ throughout the paper. 
Fig.~\ref{fig:F2F3F4F5} shows the listings  \textsc{Greedy}$(P_n)$ for $n=2,3,4,5$.  The listing  \textsc{Greedy}$(P_6)$ is illustrated in Fig.~\ref{fig:F6Generation}.  It is worth noting that starting with the path $v_\infty,v_n, v_{n-1}, \ldots, v_2$ or the star (all edges incident to $v_\infty$) did not lead to an exhaustive listing of $\trees{n}$. 

As an example of how the greedy algorithm proceeds, consider the listing  \textsc{Greedy}$(P_5)$ in Fig.~\ref{fig:F2F3F4F5}.  When the current tree $T$ is the 16th one in the listing (the one with edges $\{v_2v_\infty, v_2v_3, v_3v_4, v_5v_\infty\}$), the first pivot considered is $v_2$.  Since both $v_2v_3$ and $v_2v_\infty$ are present in the tree, no valid move is available by pivoting on $v_2$.  The next pivot considered is $v_3$.  Both edges $v_3v_2$ and $v_3v_4$ are incident with $v_3$. First, we attempt to remove $v_3v_2$ and add $v_3v_\infty$, which results in a tree previously generated. Next, we attempt to remove $v_3v_4$ and add $v_3v_\infty$, which results in a cycle. So, the next pivot, $v_4$, is considered.  The only edge incident to $v_4$ is $v_4v_3$.  By removing $v_4v_3$ and adding $v_4v_5$ we obtain a new spanning tree, the next tree in the greedy listing.

 To prove that \textsc{Greedy}$(P_n)$ does in fact contain all trees in $\trees{n}$, we
demonstrate it is equivalent to a recursively constructed listing that we obtain by studying the greedy listings.  Before we describe this recursive construction we mention one rather remarkable property of \textsc{Greedy}$(P_n)$ that we will also prove in the next section:  If $X_n$ is last tree in the listing \textsc{Greedy}$(P_n)$, then \textsc{Greedy}$(X_n)$ is precisely \textsc{Greedy}$(P_n)$ in reverse order.

\begin{figure}
\begin{center}
  \includegraphics[scale=0.25]{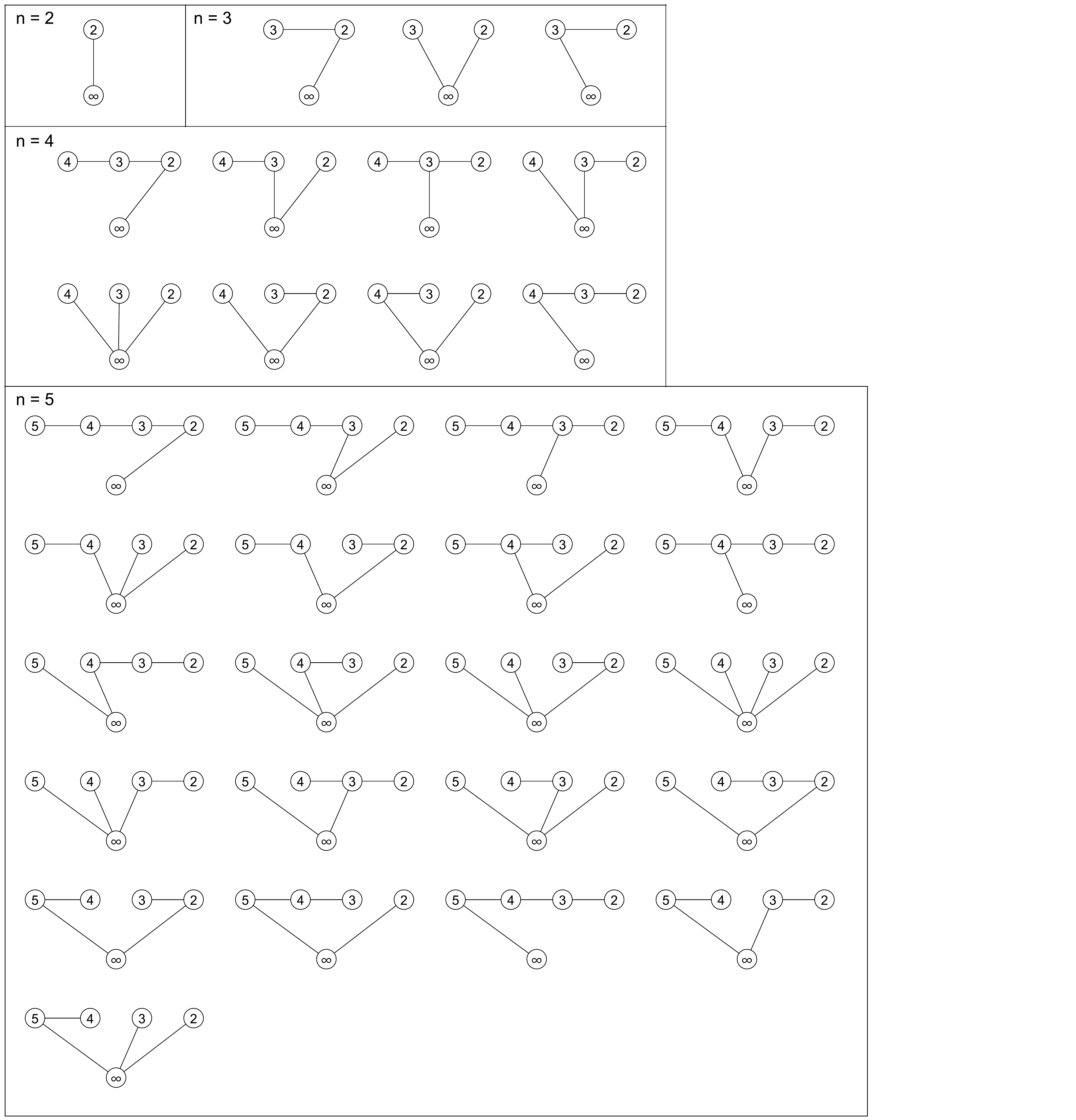}
  \caption{\textsc{Greedy}$(P_n)$ for $n=2,3,4,5$. Read left to right, top to bottom.}
  \label{fig:F2F3F4F5}
\end{center}
\end{figure}

\begin{figure}
\begin{center}
  \includegraphics[scale = 0.28]{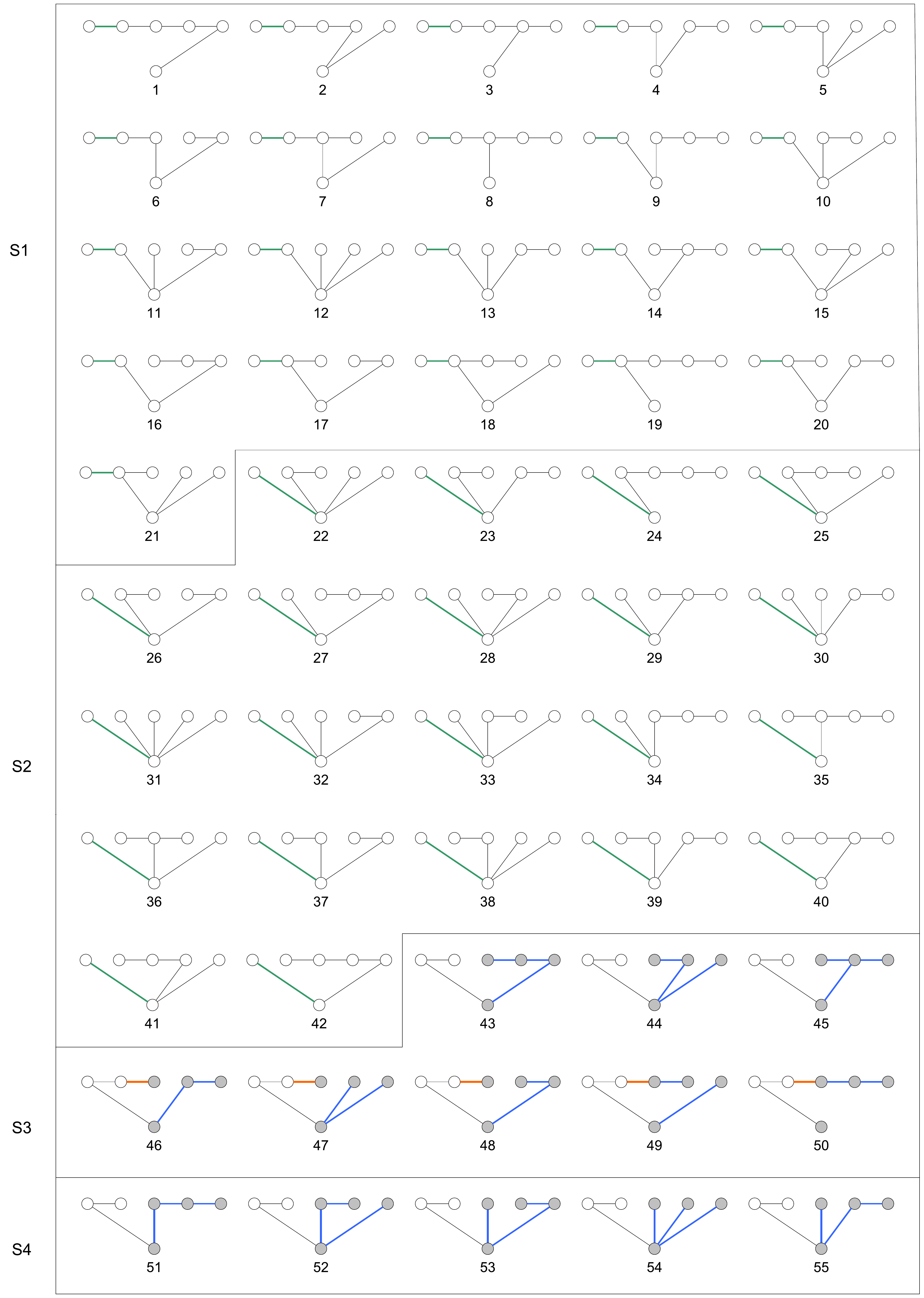}
  \caption{\textsc{Greedy}$(P_6)$ read from left to right, top to bottom. Observe that S1 is \textsc{Greedy}$(P_5)$ with $v_6 v_5$ added,  S2 is the reverse of \textsc{Greedy}$(P_5)$ with $v_6 v_\infty$ added,  S3 is \textsc{Greedy}$(P_4)$ with $v_6 v_5$ and $v_6 v_\infty$ added, except the edge $v_4 v_\infty$ is replaced by $v_4 v_5$, and S4 is the last five trees of \textsc{Greedy}$(P_4)$ in reverse order ($v_4 v_\infty$ is now present) with $v_6 v_5$ and $v_6 v_\infty$ added.}
  \label{fig:F6Generation}
\end{center}
\end{figure}


\section{An $O(1)$-amortized time Pivot Gray Code Generation for $\trees{n}$}

In this section we develop an efficient recursive algorithm to construct the listing  \textsc{Greedy}$(P_n)$.  The construction generates some sub-lists in reverse order, similar to the recursive construction of the BRGC.  The recursive properties allow us to provide efficient ranking and unranking algorithms for the listing based on counting the number of trees at each stage of the construction.   Let \spt{n} denote the number of spanning trees of $F_n$. It is known that 
\[\spt{n} = f_{2(n-1)} =  2 \frac{((3-\sqrt{5})/2)^{n}-((3+\sqrt{5})/2)^{n-2}}{5-3\sqrt{5}},\] 
where $f_n$ is the $n$th number of the Fibonacci sequence with $f_1=f_2=1$ \cite{fanformula}.

By studying the order of the spanning trees in \textsc{Greedy}$(P_n)$,
we identified four distinct stages S1, S2, S3, S4 that are highlighted for \textsc{Greedy}$(P_6)$ in 
Fig.~\ref{fig:F6Generation}.  From this figure, and referring back to Fig.~\ref{fig:F2F3F4F5} to see the recursive properties, observe that:
\begin{itemize}
    \item The trees in S1 are  equivalent to \textsc{Greedy}$(P_{5})$ with the added edge $v_6 v_{5}$. 
    
    \item The trees in S2 are equivalent to the reversal of the trees in \textsc{Greedy}$(P_{5})$ with the added edge $v_6 v_{\infty}$. 
\end{itemize}
\noindent
The trees in S3 and S4 have both edges $v_6 v_{5}$ and $v_6 v_{\infty}$ present.  
\begin{itemize}

    \item In S3,  focusing only on the vertices $v_4,v_3,v_2,v_\infty$, the induced subgraphs correspond to  \textsc{Greedy}$(P_{4})$, except whenever  $v_4v_\infty$ is present, it is replaced with $v_4v_5$ (the last five trees).
    \item  In S4, focusing only on the vertices $v_4,v_3,v_2,v_\infty$, the induced subgraphs correspond to the trees in \textsc{Greedy}$(P_{4})$
    where $v_4v_\infty$ is  present, in reverse order.
\end{itemize} 

Generalizing these observations for all $n \geq 2$ leads to the recursive procedure  {\sc Gen}($k,s_1,\mathit{varEdge}$) given in Algorithm~\ref{alg: Gen}, which uses a global variable $T$ to store the current spanning tree with $n$ vertices. The parameter $k$ indicates the number of vertices under consideration; the parameter $s_1$ indicates whether or not to generate the trees in stage S1, as required by the trees for S4; and the parameter $\mathit{varEdge}$ indicates whether or not a variable edge needs to be added as required by the trees for S3.  The procedure {\sc RevGen}($k,s_1,\mathit{varEdge}$), which is left out due to space constraints, simply performs the operations from {\sc Gen}($k,s_1,\mathit{varEdge}$) in reverse order.
For each algorithm the base cases correspond to the edge moves in the listings  \textsc{Greedy}$(P_2)$ and \textsc{Greedy}$(P_3)$.  

Let $\LIST{n}$ denote the listing obtained by initializing $T$ to $P_n$, printing $T$, and  calling {\sc Gen}($n,1,0$).  Let $L_n$ denote the last tree in this listing. Let $\REVLIST{n}$ denote the listing obtained by initializing $T$ to $L_n$, printing $T$, and  calling {\sc RevGen}($n,1,0$). Thus, $\REVLIST{n}$ is the  the listing $\LIST{n}$ in reverse order. 

\begin{algorithm}
    \footnotesize \caption{}
    \label{alg: Gen}
	\begin{algorithmic}[1]
	    \Procedure{Gen}{$k, s_1, varEdge$}
		\If{$k = 2$} \Comment{$F_2$ base case}
			\If{$varEdge$} $T \gets T - v_2 v_\infty + v_2 v_3$; \textsc{Print}$(T)$
			\EndIf
		\ElsIf{$k = 3$} \Comment{$F_3$ base case}
			 \If{$s_1$} 
				\If{$varEdge$} $T \gets T - v_3 v_2 + v_3 v_4$; \textsc{Print}$(T)$
				\Else{ $T \gets T - v_3 v_2 + v_3 v_\infty$}; \textsc{Print}$(T)$
				\EndIf
			\EndIf	
		 \State $T \gets T - v_2 v_\infty + v_2 v_3$; \textsc{Print}$(T)$
		\Else 
			\If{$s_1$} 
				\State \Gen{}$(k-1, 1, 0)$	 \Comment{S1}
				\If{$varEdge$} $T \gets T - v_k v_{k-1} + v_k v_{k+1}$; \textsc{Print}$(T)$
				\Else{ $T \gets T - v_k v_{k-1} + v_k v_\infty$}; \textsc{Print}$(T)$
				\EndIf
			\EndIf 
			\State \RevGen{}$(k-1, 1, 0)$ \Comment{S2}
			\State $T \gets T - v_{k-1} v_{k-2} + v_{k-1} v_k$; \textsc{Print}$(T)$
			\State	\Gen{}$(k-2, 1, 1)$  \Comment{S3}
			\If{$k > 4$} $T \gets T - v_{k-2} v_{k-1} + v_{k-2} v_\infty$; \textsc{Print}$(T)$
			\EndIf 
			\State \RevGen{}$(k-2, 0, 0)$  \Comment{S4}
		\EndIf
	    \EndProcedure
	\end{algorithmic}
\end{algorithm}

Our goal is to show that $\LIST{n}$ exhaustively lists all trees in $\trees{n}$ and moreover, the listing is equivalent to {\sc Greedy}($P_n$). We accomplish this in two steps: first we show that $\LIST{n}$ has the required size, then we show that $\LIST{n}$ is equivalent to {\sc Greedy}($P_n$).  Before  doing this, we first comment on some notation.
Let $T - v_i$ denote the tree obtained from $T$ by deleting the vertex $v_i$  along with all edges that have $v_i$ as an endpoint. Let $T + v_i v_j$ (resp. $T-v_i v_j$) denote the tree obtained from $T$ by adding (resp. deleting) the edge $v_i v_j$. For the remainder of this section, we will let $T_n$ denote the tree $T$ specified as a global variable for \Gen{} and \RevGen{}, and we let $T_{n-1}=T-v_n$ and $T_{n-2}=T-v_n-v_{n-1}$.  

\begin{lemma} \label{EdgeMovesLemma}
For $n \geq 2$, $|\LIST{n}| = |\REVLIST{n}| = \spt{n}$.
\end{lemma}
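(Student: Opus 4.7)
The plan is to prove $|\LIST{n}| = \spt{n}$ by induction on $n$; the equality $|\REVLIST{n}| = |\LIST{n}|$ is immediate from the construction, since \RevGen{} outputs the same multiset of trees as \Gen{} but in reverse order. Since $\spt{n} = f_{2(n-1)}$ and the Fibonacci identity $f_{2m} = 3f_{2(m-1)} - f_{2(m-2)}$ holds, $\spt{n}$ satisfies $\spt{n} = 3\spt{n-1} - \spt{n-2}$ for $n \geq 4$ with $\spt{2} = 1$ and $\spt{3} = 3$. It therefore suffices to show $a_n := |\LIST{n}|$ satisfies the same second-order recurrence with the same initial values.

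To establish this, let $g(k, s_1, v)$ denote the number of \textsc{Print} statements executed by \Gen{}$(k, s_1, v)$. The values for $k \in \{2,3\}$ are read directly off the base cases of Algorithm~\ref{alg: Gen}. For $k \geq 4$, a line-by-line trace of the pseudocode yields
\[
g(k, 1, v) \;=\; 2\,g(k-1, 1, 0) + g(k-2, 1, 1) + g(k-2, 0, 0) + 2 + [k > 4],
\]
with an analogous formula for $g(k, 0, v)$ that differs only by omitting the S1 block (one copy of $g(k-1, 1, 0)$ and one separator \textsc{Print}). In particular $g(k, s_1, v)$ is independent of $v$ for $k \geq 4$, and subtracting the two expressions gives the key identity $g(k, 0, 0) = g(k, 1, 0) - g(k-1, 1, 0) - 1$.

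Next I plan to collapse these relations to a pure recurrence in $a_n$. For $k \geq 5$ one has $g(k-2, 1, 1) = g(k-2, 1, 0)$, and for $k \geq 6$ the key identity for $g(k-2, 0, 0)$ applies; substituting both into the displayed formula produces the third-order linear recurrence $a_n = 2a_{n-1} + 2a_{n-2} - a_{n-3}$, valid for $n \geq 6$. Rewriting the inductive hypothesis $a_{n-1} = 3a_{n-2} - a_{n-3}$ in the form $a_{n-3} = 3a_{n-2} - a_{n-1}$ and substituting converts this into the target $a_n = 3a_{n-1} - a_{n-2}$. The boundary values $a_4 = 8$, $a_5 = 21$, and $a_6 = 55$ are verified by direct computation (matching $\spt{4}, \spt{5}, \spt{6}$), which both launches the induction and covers the cases where the third-order recurrence does not yet apply.

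The main obstacle is the auxiliary parameter $\mathit{varEdge}$ together with the recursive calls \Gen{}$(k-2, 1, 1)$ in stage S3 and \RevGen{}$(k-2, 0, 0)$ in stage S4; these are not direct invocations of \Gen{}$(k, 1, 0)$, so $a_n$ cannot be expressed purely in terms of $a_{n-1}$ and $a_{n-2}$ without extra bookkeeping. The observation that unlocks the algebra is that once $k \geq 4$ the output count is independent of $v$, which collapses the state space and lets the auxiliary quantities $g(k-2, 1, 1)$ and $g(k-2, 0, 0)$ be re-expressed in terms of $a_{k-2}$ and $a_{k-3}$, making the final reduction routine.
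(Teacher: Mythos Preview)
Your proposal is correct and takes essentially the same approach as the paper: both count the trees produced in each of the four stages S1--S4 of \Gen{} and reduce to a Fibonacci-type recurrence, with base cases $n=2,3,4$ checked directly. Your explicit bookkeeping via $g(k,s_1,v)$ and the derivation of the third-order recurrence $a_n = 2a_{n-1} + 2a_{n-2} - a_{n-3}$ (then collapsed to $a_n = 3a_{n-1} - a_{n-2}$) simply fills in the details the paper omits for space, so there is no substantive difference.
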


\begin{proof}
This result applies the Fibonacci recurrence and straightforward induction by counting the number of trees recursively generated in each stage S1, S2, S3, S4 as described earlier in this section. The base cases for $n=2,3,4$ are easily verified by stepping through the algorithms. A formal proof is omitted due to space constraints.~\hfill $\Box$
\end{proof}

To prove the next result, we first detail some required terminology.
If $T \in \trees{n}$, then we say that the operation of deleting an edge $v_i v_j$ and adding an edge $v_i v_k$ is a \emph{valid} edge move of $T$ if the result is a tree in $\trees{n}$ that has not been generated yet. Conversely, if the result is not a tree in $\trees{n}$, or the result is a tree that has already been generated, then it is not a \emph{valid} edge move of $T$. We say an edge $v_i v_j$ is \emph{smaller} than edge $v_i v_k$ if $j<k$. An edge move $T_n - v_i v_j + v_i v_k$ is said to be \emph{smaller} than another edge move $T_n - v_x v_y + v_x v_z$ if $i<x$, if $i=x$ and $j<y$, or if $i=x$, $j=y$, and $k<z$.

\begin{lemma} \label{Gen=GreedyTheorem}  
For $n\geq 2$,  $\LIST{n} = \textsc{Greedy}(P_n)$ and $\REVLIST{n} = \textsc{Greedy}(L_n)$.
\end{lemma}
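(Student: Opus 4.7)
The proof is by strong induction on $n$, establishing both equalities simultaneously. The base cases $n=2,3$ are verified directly against Fig.~\ref{fig:F2F3F4F5}. For $n \geq 4$, Lemma~\ref{EdgeMovesLemma} already forces $|\LIST{n}| = \spt{n}$, so it suffices to show that every consecutive pair in $\LIST{n}$ is the unique greedy move at that point (and dually for $\REVLIST{n}$); once both of these are checked, exhaustiveness of $\textsc{Greedy}(P_n)$ follows from the cardinality match.

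I would split the transitions of $\LIST{n}$ into \emph{interior} steps, taken inside a single stage S1--S4, and \emph{boundary} steps between stages. In each stage the edges $v_nv_{n-1}$ and $v_nv_\infty$ have a fixed status (in S1 only $v_nv_{n-1}$ is present; in S2 only $v_nv_\infty$; in S3 and S4 both are present), which lets a short case analysis over pivots $v_i$ conclude that, within the stage, every edge pivot in $F_n$ not already visible in the induced smaller fan is invalid: it either deletes a locked edge (disconnecting $v_n$ or $v_{n-1}$) or tries to add an edge already present, or closes a cycle through the high-numbered block. Consequently the greedy priority ordering on every interior tree of the stage agrees with the greedy priority ordering on the sub-fan, and the inductive hypothesis equates the stage with the relevant sub-listing (namely $\LIST{n-1}$, $\REVLIST{n-1}$, $\LIST{n-2}$, and $\REVLIST{n-2}$ respectively, with the $\mathit{varEdge}$ substitution in S3 handled by noting that replacing $v_{n-2}v_\infty$ by $v_{n-2}v_n$ leaves the incidence structure at every smaller pivot unchanged).

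The boundary steps are the more delicate part and form the main obstacle. At the end of S1 the current tree lifts $L_{n-1}$ by the edge $v_nv_{n-1}$; by the inductive hypothesis $L_{n-1}$ is greedy-terminal in $F_{n-1}$, so no pivot on any $v_i$ with $i \leq n-1$ has a valid unused move in $F_n$ either (by the same case analysis), forcing greedy to take the $v_n$-pivot $-v_nv_{n-1}+v_nv_\infty$ selected by \Gen{}. The S2$\to$S3 and S3$\to$S4 boundaries are verified by the same template with $v_{n-1}$ in the role of $v_n$; the $\mathit{varEdge}$ hand-off at the S3$\to$S4 boundary requires extra care because the algorithm only executes that pivot when $k > 4$, so $n=4$ must be treated as a small extra base case.

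Finally, to obtain $\REVLIST{n} = \textsc{Greedy}(L_n)$ I would reuse the case analysis above: the locked-edge pattern of each stage is invariant under time reversal, so the interior argument applies verbatim with the inductive hypothesis $\REVLIST{k} = \textsc{Greedy}(L_k)$ supplying the sub-steps, and the boundary arguments are obtained by the same enumeration of available pivots at the relevant trees, now with the \emph{successors} (rather than predecessors) in $\LIST{n}$ counted as already generated. The hardest thing to keep track of is the need to carry the induction in the stronger form that the extremal trees $P_k, L_k$ of each sub-listing admit \emph{no} other greedy moves inside the sub-fan; this is exactly what is needed to rule out smaller-pivot competitors and force the boundary pivot to be the unique greedy choice in $F_n$.
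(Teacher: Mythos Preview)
Your proposal is correct and follows essentially the same strategy as the paper: strong induction on $n$ with a stage-by-stage verification (interior steps handled by the inductive hypothesis on the sub-fan, boundary steps by direct pivot analysis), finishing with the cardinality match from Lemma~\ref{EdgeMovesLemma} to force \textsc{Greedy}$(P_n)$ to halt. Two small remarks: the paper simply takes $n=2,3,4$ as base cases rather than treating $n=4$ separately at the S3$\to$S4 boundary, and your $\mathit{varEdge}$ replacement edge should read $v_{n-2}v_{n-1}$ rather than $v_{n-2}v_n$.
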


\begin{proof}
By induction on $n$.  It is straightforward to verify that the result holds for $n=2,3,4$ by iterating through the algorithms. Assume $n>4$, and that $\LIST{j} = \textsc{Greedy}(P_j)$ and $\REVLIST{j} = \textsc{Greedy}(L_j)$ for $2\le j<n$. We begin by showing $\LIST{n} = \textsc{Greedy}(P_n)$, breaking the proof into each of the four stages for clarity. \\

\noindent \underline{S1:} Since $n>4$ and $s_1=1$, \Gen{}$(n-1, 1, 0)$ is executed. By our inductive hypothesis, $\LIST{n-1} = \textsc{Greedy}(P_{n-1})$. These must be the first trees for \textsc{Greedy}$(P_n)$, as any edge move involving $v_n v_{n-1}$ or $v_n v_\infty$ is larger than any edge move made by \textsc{Greedy}$(P_{n-1})$. Since \textsc{Greedy}$(P_{n-1})$ halts, it must be that no edge move of $T_{n-1}$ is possible. So \textsc{Greedy}$(P_n)$ must make the next smallest edge move, which is $T_n - v_n v_{n-1} + v_n v_\infty$. Since $T_n$ is a spanning tree, it follows that $T_n - v_n v_{n-1} + v_n v_\infty$ is also a spanning tree (and has not been generated yet), and therefore the edge move is valid. At this point, \Gen{}$(n, 1, 0)$ also makes this edge move, by line 13.\\
\begin{sloppypar}
\noindent \underline{S2:} \RevGen{}$(n-1, 1, 0)$ ($T_{n-1} = L_{n-1}$) is then executed. By our inductive hypothesis, $\REVLIST{n} = \textsc{Greedy}(L_{n-1})$. Since \textsc{Greedy}$(L_{n-1})$ halts, it must be that no edge moves of $T_{n-1}$ are possible. At this point, $T_{n-1} = P_{n-1}$ because \RevGen{}$(n-1, 1, 0)$ was executed. The smallest edge move now remaining is $T_n - v_{n-2} v_{n-1} + v_n v_{n-1}$. This results in $T_n = P_{n-2} + v_n v_{n-1} + v_n v_\infty$, which is a spanning tree that has not been generated. So, \textsc{Greedy}$(P_n)$ must make this move. \Gen{}$(n, 1, 0)$ also makes this move, by line 15. So, $\LIST{n}$ must equal \textsc{Greedy}$(P_n)$ up to the end of S2.\\

\noindent \underline{S3:} Next, \Gen{}$(n-2, 1, 1)$ starting with $T_{n-2} = P_{n-2}$ is executed. Since $varEdge = 1$, $v_{n-2} v_{n-1}$ is added instead of $v_{n-2} v_\infty$. \textsc{Greedy}$(P_n)$ also adds $v_{n-2} v_{n-1}$ instead of $v_{n-2} v_\infty$ since $v_{n-2} v_{n-1}$ is smaller than $v_{n-2} v_\infty$ and this edge move results in a tree not yet generated. Other than the difference in this one edge move, which occurs outside the scope of $T_{n-2}$, \Gen{}$(n-2, 1, 0)$ and \Gen{}$(n-2, 1, 1)$ (both starting with $T_{n-2}=P_{n-2}$) make the same edge moves. Since we also know that $\LIST{n-2} = \textsc{Greedy}(P_{n-2})$ by the inductive hypothesis, it follows that $\LIST{n}$ continues to equal \textsc{Greedy}$(P_n)$ after line 16 of \Gen{}$(n,1,0)$ is executed. We know that $T_{n-2} = L_{n-2}$ after \Gen{}$(n-2, 1, 0)$. However, $T_{n-2} = L_{n-2} - v_{n-2} v_\infty + v_{n-2} v_{n-1}$ instead because \Gen{}$(n-2, 1, 1)$ was executed ($varEdge=1$). It must be that no edge moves of $T_{n-2}$ are possible because \textsc{Greedy}$(P_{n-2})$ (and \Gen{}$(n-2, 1, 1)$) halted. The smallest edge move now remaining is $T_n - v_{n-2} v_{n-1} + v_{n-2} v_\infty$. This results in $T_{n-2} = L_{n-2}$. Also, $T_n = T_{n-2} + v_n v_{n-1} + v_n v_\infty$ is a spanning tree since $T_{n-2}$ is a spanning tree of $F_{n-2}$. So \textsc{Greedy}$(P_n)$ makes this move. \Gen{}$(n, 1, 0)$ also makes this move, by line 17, and thus $\LIST{n} = \textsc{Greedy}(P_n)$ up to the end of S3. \\
\end{sloppypar}
\noindent \underline{S4:} Finally, \RevGen{}$(n-2, 0, 0)$  starting with $T_{n-2} = L_{n-2}$ is executed. By our inductive hypothesis, $\REVLIST{n-2} = \textsc{Greedy}(L_{n-2})$. From the recursive definition of \RevGen{}, it is clear that \RevGen{}$(n-2, 0, 0)$ and \RevGen{}$(n-2, 1, 0)$ make the same edge moves until \RevGen{}$(n-2, 0, 0)$ finishes executing. So, by the inductive hypothesis, the listings produced by \RevGen{}$(n-2, 0, 0)$ and \textsc{Greedy}$(L_{n-2})$ are the same until this point, which is where \Gen{}$(n, 1, 0)$ finishes execution. By Lemma~\ref{EdgeMovesLemma} we have that $|\LIST{n}| = \spt{n}$. Therefore, \textsc{Greedy}$(P_n)$ has also produced this many trees, and each tree is unique. Thus, it must be that all $\spt{n}$ trees of $F_n$ have been generated. Thus, \textsc{Greedy}$(P_n)$ also halts.

Since $\LIST{n}$ and \textsc{Greedy}$(P_n)$ start with the same tree, produce the same trees in the same order, and halt at the same place, it follows that $\LIST{n} = \textsc{Greedy}(P_n)$. It is relatively straightforward to show that $\REVLIST{n} = \textsc{Greedy}(L_n)$ by using similar arguments as above. This proof is omitted due to space constraints. 
\hfill $\Box$
\end{proof}
Since $\LIST{n}$ is the reversal of $\REVLIST{n}$, we immediately obtain the following corollary.
\begin{corollary}
For $n \geq 2$, \textsc{Greedy}$(P_n)$ is equivalent to \textsc{Greedy}$(L_n)$ in reverse order.
\end{corollary}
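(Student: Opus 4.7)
The plan is to derive the corollary directly from Lemma~\ref{Gen=GreedyTheorem} together with the definitional relationship between $\LIST{n}$ and $\REVLIST{n}$. First I would invoke Lemma~\ref{Gen=GreedyTheorem} to obtain the two equalities $\LIST{n} = \textsc{Greedy}(P_n)$ and $\REVLIST{n} = \textsc{Greedy}(L_n)$ as listings of spanning trees.

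Next, I would appeal to the definition of \RevGen{} given in the text just above Lemma~\ref{EdgeMovesLemma}, namely that \RevGen{}$(k,s_1,\mathit{varEdge})$ performs exactly the edge moves of \Gen{}$(k,s_1,\mathit{varEdge})$ in reverse order, starting from $L_n$ rather than $P_n$. Consequently $\REVLIST{n}$ is by construction the sequence $\LIST{n}$ traversed from its last tree back to its first; this is how $L_n$ was defined (as the last tree of $\LIST{n}$, which becomes the first tree of $\REVLIST{n}$) and why the initial call prints $T=L_n$ before invoking \RevGen{}.

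Combining these two facts, $\textsc{Greedy}(L_n) = \REVLIST{n}$ equals $\LIST{n}$ in reverse, which in turn equals $\textsc{Greedy}(P_n)$ in reverse. That chain of equalities is the conclusion of the corollary.

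I do not expect any real obstacle here: the corollary is essentially a packaging statement and requires no new induction or case analysis beyond what Lemma~\ref{Gen=GreedyTheorem} already established. The only minor point worth writing out carefully is that the reversal relationship between \Gen{} and \RevGen{} is structural (operation-by-operation), so it implies an actual term-by-term reversal of the listings, not merely that the two listings contain the same set of trees. Once that is stated, no further argument is needed.
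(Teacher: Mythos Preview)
Your proposal is correct and matches the paper's approach exactly: the paper simply notes that $\LIST{n}$ is the reversal of $\REVLIST{n}$ by construction, and then the corollary follows immediately from the two equalities of Lemma~\ref{Gen=GreedyTheorem}. No additional argument is given or needed.
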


Because \textsc{Greedy}$(P_n)$ generates unique spanning trees of $F_n$, Lemma~\ref{EdgeMovesLemma} together with Lemma~\ref{Gen=GreedyTheorem} implies our first main result.  This result answers {\bf Research Question \#3} and the first part of {\bf Research Question \#1} for fan graphs.
\begin{theorem}
For $n \geq 2$,  $\LIST{n}$ = \textsc{Greedy}$(P_n)$ is a pivot Gray code listing of $\trees{n}$.
\end{theorem}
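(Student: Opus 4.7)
The plan is to read off the theorem as a packaging of the two preceding lemmas, with one additional observation about the shape of each move made by the greedy procedure.

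First I would verify the pivot Gray code property directly from the specification of the greedy strategy. By definition, every transition in \textsc{Greedy}$(P_n)$ is obtained by selecting a pivot vertex $u$, removing some edge $uv$ incident to $u$, and adding another edge $uw$ incident to $u$. Hence any two consecutive trees in \textsc{Greedy}$(P_n)$ differ by exactly one removed edge and one added edge, and these two edges share the common endpoint $u$. This is precisely the definition of a pivot Gray code transition, so the property holds for the entire listing regardless of its length.

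Next I would establish exhaustiveness by combining the two lemmas. Lemma~\ref{Gen=GreedyTheorem} gives $\LIST{n} = \textsc{Greedy}(P_n)$, and Lemma~\ref{EdgeMovesLemma} gives $|\LIST{n}| = \spt{n}$. Hence $\textsc{Greedy}(P_n)$ consists of exactly $\spt{n}$ trees. Since the greedy strategy only commits to an edge move whose resulting tree has not already been generated, all trees appearing in \textsc{Greedy}$(P_n)$ are spanning trees of $F_n$ and are pairwise distinct. A set of $\spt{n}$ distinct spanning trees of $F_n$ must equal $\trees{n}$, so the listing is exhaustive.

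Combining the two observations, \textsc{Greedy}$(P_n)$ is an exhaustive listing of $\trees{n}$ in which each consecutive pair of trees differs by a pivot around a common vertex, which is precisely the claim. There is no serious obstacle at this stage: the substantive work was already carried out in Lemma~\ref{EdgeMovesLemma} (counting) and Lemma~\ref{Gen=GreedyTheorem} (equivalence with the recursive construction), and the only subtlety is noting that the pivot Gray code property is built into the greedy priority scheme itself and therefore needs no further argument beyond unwinding the definition.
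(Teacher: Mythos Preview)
Your proposal is correct and mirrors the paper's own argument: the paper also derives the theorem directly from Lemma~\ref{EdgeMovesLemma} and Lemma~\ref{Gen=GreedyTheorem}, together with the observation that \textsc{Greedy}$(P_n)$ produces distinct spanning trees by construction. Your explicit check that the pivot Gray code property is built into the greedy move is a small elaboration the paper leaves implicit, but the overall structure is the same.
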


To efficiently store the global tree $T$, the algorithms \Gen{} and \RevGen{} can employ an adjacency list model where each edge $uv$ is associated only with the smallest labeled vertex $u$ or $v$.  This means $v_\infty$ will never have any edges associated with it, and every other vertex will have at most 3 edges in its list.  Thus the tree $T$ requires at most $O(n)$ space to store, and edge additions and deletions can be done in constant time.  Our next result answers the second part of {\bf Research Question \#1} for fan graphs.

\begin{theorem}
For $n\geq 2$, $\LIST{n}$ and $\REVLIST{n}$ can be generated in $O(1)$-amortized time using $O(n)$ space.
\end{theorem}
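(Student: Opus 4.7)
The plan is to prove both claims by standard arguments for recursive Gray-code generation. For the space bound, the tree $T$ is maintained using the adjacency-list convention described immediately before the theorem: each edge is attached only to its smaller-labeled endpoint, so $T$ occupies $O(n)$ words and supports $O(1)$-time edge insertions and deletions. Because every recursive call of \Gen{} and \RevGen{} strictly decreases the parameter $k$, the recursion depth is at most $n$, and each activation record carries only $O(1)$ local data; summed with the storage for $T$ this gives $O(n)$ total space. For the time bound, inspection of Algorithm~\ref{alg: Gen} shows that outside its recursive sub-calls and \textsc{Print} statements a single activation performs a bounded number of branch tests, parameter comparisons, and edge updates on $T$, each $O(1)$. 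By Lemma~\ref{EdgeMovesLemma} the number of \textsc{Print} statements during the construction of $\LIST{n}$ equals $\spt{n}$, so it suffices to prove that the total number of activations is also $O(\spt{n})$, which then yields total running time $O(\spt{n})$ and hence $O(1)$-amortized time per generated tree.

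To count activations, let $B(k)$ and $C(k)$ denote the total number of activations rooted at a call to \Gen{}$(k,1,\cdot)$ and \Gen{}$(k,0,\cdot)$, respectively; since \RevGen{} mirrors \Gen{} step-for-step, the same counts apply. Reading off Algorithm~\ref{alg: Gen} for $k\ge 4$ gives
\begin{align*}
B(k) &= 1 + 2B(k-1) + B(k-2) + C(k-2),\\
C(k) &= 1 + B(k-1) + B(k-2) + C(k-2),
\end{align*}
with $B(2)=B(3)=C(2)=C(3)=1$. Subtracting yields $B(k)-C(k)=B(k-1)$; substituting into the recurrence for $B$ collapses the system into $B(k)=1+2B(k-1)+2B(k-2)-B(k-3)$. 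The characteristic polynomial $x^3-2x^2-2x+1$ factors as $(x+1)(x^2-3x+1)$, whose dominant root is $\psi=(3+\sqrt{5})/2$. Since $\spt{k}$ itself satisfies $\spt{k}=3\spt{k-1}-\spt{k-2}$ (a standard identity on even-indexed Fibonacci numbers applied to the closed form in Section~3), with the same characteristic polynomial $x^2-3x+1$ and dominant root $\psi$, both sequences grow at rate $\Theta(\psi^k)$ and hence $B(k)=O(\spt{k})$.

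The main technical obstacle is precisely the step $B(k)=O(\spt{k})$: a naive single-constant induction $B(k),C(k)\le c\cdot\spt{k}$ does not close, because $\spt{k-1}/\spt{k-2}\to\psi\approx 2.618<3$ leaves no slack after applying $\spt{k}=3\spt{k-1}-\spt{k-2}$. The cleanest fix is to carry two constants with $B(k)\le c\cdot\spt{k}$ and $C(k)\le c'\cdot\spt{k}$; both induction inequalities then reduce, asymptotically, to the single identity $\psi-2=1/\psi$, which follows from $\psi^2=3\psi-1$, provided one takes $c'=c/\psi$ and chooses $c$ large enough to cover the base cases $k\in\{2,3,4\}$ verified by direct inspection.
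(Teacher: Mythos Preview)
Your argument is correct: the recurrences for $B(k)$ and $C(k)$ are read off accurately from Algorithm~\ref{alg: Gen}, the reduction to $B(k)=1+2B(k-1)+2B(k-2)-B(k-3)$ is valid for $k\ge 6$, and the characteristic-polynomial analysis legitimately gives $B(k)=\Theta(\psi^k)=\Theta(\spt{k})$. The space argument matches the paper's.

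However, you have taken a considerably longer route than necessary. The paper dispenses with all of the recurrence-solving by a direct amortization: each activation with $k>3$ spawns at most four recursive calls while itself executing at least two \textsc{Print} statements (lines 12/13 and 15, plus possibly 17). Hence the number of non-base activations is at most half the number of trees printed, and the total number of activations is at most $2\spt{n}+O(1)$. This ``constant output per bounded fan-out'' argument is a one-paragraph proof and avoids characteristic polynomials, the Fibonacci identity $\spt{k}=3\spt{k-1}-\spt{k-2}$, and the two-constant induction you describe in your final paragraph. What your approach buys is an explicit growth constant and a self-contained verification that does not rely on inspecting how many \textsc{Print}s occur in each branch; what the paper's approach buys is brevity and robustness to small changes in the algorithm.
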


\begin{proof}
For each call to \Gen{}$(n, s_1, varEdge)$ where $n>3$, there are at most four recursive function calls, and at least two new spanning trees generated.  Thus, the total number of recursive calls made is at most twice the number of spanning trees generated.  Each edge addition and deletion can be done in constant time as noted earlier. Thus each recursive call requires a constant amount of work, and hence the overall algorithm will run in $O(1)$-amortized time.  There is a constant amount of memory used at each recursive call and the recursive stack goes at most $n-3$ levels deep; this requires $O(n)$ space.  As mentioned earlier, the global variable $T$ stored as adjacency lists also requires $O(n)$ space.
\hfill $\Box$
\end{proof}

\subsection{Ranking and Unranking}

We now provide ranking and unranking algorithms for the listing $\LIST{n}$ of all spanning trees for the fan graph $F_n$. 

Given a tree $T$ in $\LIST{n}$, we calculate its rank by recursively determining which stage (recursive call) $T$ is generated. We can determine the stage by focusing on the presence/absence of the edges $v_n v_{n-1}$, $v_n v_\infty$, $v_{n-2} v_\infty$, and $v_{n-2} v_{n-1}$.  Based on the discussion of the recursive algorithm, there are $t_{n-1}$ trees generated in S1,  $t_{n-1}$ trees generated in S2,  $t_{n-2}$ trees generated in S3, and $t_{n-2} - t_{n-3}$ trees generated in S4. S3 is partitioned into two cases based on whether $v_{n-2} v_{n-1}$ ($varEdge)$ is present. For the remainder of this section we will let $T_{n-1} = T - v_n$ and $T_{n-2} = T - v_n - v_{n-1}$.

For $n>1$, let $R_n(T)$ denote the rank of $T$ in the listing $\LIST{n}$. If $n=2,3,4$, then $R_n(T)$ can easily be derived from Fig.~\ref{fig:F2F3F4F5}.  Based on the above discussion, for $n\geq 5$: 
\begin{small}
\begin{equation*} 
 R_n(T) = 
       \begin{cases}
		2 \spt{n-1} + 2 \spt{n-2} - R_{n-2}(T_{n-2}) + 1 
		& \text{if $e_1, e_2, e_3 \in T$} \\ 

		2 \spt{n-1} + R_{n-2}(T_{n-2} + e_3) 
		& \text{if $e_1, e_2, e_4 \in T$, $e_3 \not \in T$} \\

		2 \spt{n-1} + R_{n-2}(T_{n-2})
		& \text{if $e_1, e_2 \in T$, $e_3, e_4 \not \in T$} \\ 

		2 \spt{n-1} - R_{n-1}(T_{n-1}) + 1 & \text{if $e_2 \in T$, $e_1 \not \in T$} \\ 
		R_{n-1}(T_{n-1}) & \text{if $e_1 \in T$, $e_2 \not \in T$}
        \end{cases} 
\end{equation*}
\end{small}
where $e_1 = v_n v_{n-1}$, $e_2 = v_n v_\infty$, $e_3 = v_{n-2} v_\infty$, and $e_4 = v_{n-2} v_{n-1}$.

Determining the tree $T$ at rank $r$ in the listing $\LIST{n}$ follows similar ideas by constructing $T$ starting from a set of $n$ isolated vertices one edge at a time. Let $U_n(T, r, e)$ return the tree $T$ at rank $r$ for the listing $\LIST{n}$. Initially, $T$ is the set of $n$ isolated vertices, $r$ is the specified rank, and $e = v_n v_\infty$.
If $n=2,3,4$, then $T$ is easily derived from Fig.~\ref{fig:F2F3F4F5}. For these cases, if the edge $v_n v_\infty$ is present, then it is replaced by the edge $e$ that is passed in.
\begin{footnotesize}
\begin{equation*} 
 U_n(T, r, e) = 
       \begin{cases}
	U_{n-1}(T {+} e_1, r, v_{n-1} v_\infty) 
	& \text{if $0 < r \leq \spt{n-1}$,} \\ 

	U_{n-1}(T {+} e, 2 \spt{n-1} {-} r {+} 1, v_{n-1} v_\infty) 
	& \text{if $\spt{n-1} < r \leq 2 \spt{n-1}$,} \\ 

	U_{n-2}(T {+} e_1 {+} e, r {-} 2\spt{n-1}, e_4) 
	& \text{if $2 \spt{n-1} < r \leq 2 \spt{n-1} {+} \spt{n-2}$,} \\ 

	U_{n-2}(T {+} e_1 {+} e, 2 \spt{n-1} {+} 2 \spt{n-2} {-} r {+} 1, e_3)
	& \text{otherwise.}
        \end{cases}  
\end{equation*}
\end{footnotesize}
where $e_1 = v_n v_{n-1}$, $e_3 = v_{n-2} v_\infty$, and $e_4 = v_{n-2} v_{n-1}$.

Since the recursive formulae to perform the ranking and unranking operations each perform a constant number of operations and the recursion goes $O(n)$ levels deep, we arrive at the following result provided the first $2(n{-}2)$ Fibonacci numbers are precomputed. We note that the calculations are on 
numbers up to size $t_{n-1}$.

 \begin{theorem}
 The listing $\LIST{n}$  can be ranked and unranked in $O(n)$ time using $O(n)$ space under the unit cost RAM model.
 \end{theorem}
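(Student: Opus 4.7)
The plan is to verify correctness of the displayed recurrences for $R_n$ and $U_n$ by induction on $n$, and then perform a straightforward complexity analysis under the unit-cost RAM model.

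For correctness of $R_n$, I would check the base cases $n \in \{2,3,4\}$ directly from Fig.~\ref{fig:F2F3F4F5}. The inductive step would reuse the four-stage decomposition of $\LIST{n}$ from Lemma~\ref{Gen=GreedyTheorem}: the four edges $e_1, e_2, e_3, e_4$ determine in constant time which stage contains $T$. For S1 (where $e_1 \in T$ and $e_2 \notin T$), the rank is simply $R_{n-1}(T_{n-1})$ since S1 is $\LIST{n-1}$ with $e_1$ appended. For S2 (where $e_2 \in T$ and $e_1 \notin T$), since S2 is $\REVLIST{n-1}$ with $e_2$ appended and starts at position $\spt{n-1}+1$, I would use $2\spt{n-1} - R_{n-1}(T_{n-1}) + 1$. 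For the S3/S4 case (both $e_1, e_2 \in T$), the inspection of $e_3$ and $e_4$ further discriminates the two stages: S3 uses $e_4$ in place of $e_3$ due to the $varEdge$ mechanism, so one must undo this swap before recursing into $\LIST{n-2}$, while S4 proceeds into $\REVLIST{n-2}$ restricted to those trees containing $v_{n-2}v_\infty$, which accounts for the $\spt{n-2} - \spt{n-3}$ count and the corresponding offset. The unranking formula would be verified by a symmetric argument, with the extra parameter $e$ tracking which edge substitutes $v_k v_\infty$ at the base cases.

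For complexity, I would argue that each invocation of $R_n$ or $U_n$ performs $O(1)$ edge-presence checks, $O(1)$ table lookups of $\spt{j}$ values, and $O(1)$ arithmetic operations on words of magnitude at most $\spt{n}$, and then makes a single recursive call with a strictly smaller index. Hence the recursion depth is at most $n-1$, the total work is $O(n)$ under the unit-cost RAM model, and the recursion stack uses $O(n)$ space. Precomputing the table $\spt{2},\ldots,\spt{n}$ via the Fibonacci recurrence takes $O(n)$ time and $O(n)$ space, yielding the claimed bounds.

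The hardest part will be the S3/S4 case, where the $varEdge$ mechanism introduces a one-edge discrepancy between the literal subtree $T_{n-2} = T - v_n - v_{n-1}$ and the tree that corresponds to a position in $\LIST{n-2}$. Once this discrepancy is patched by the single edge swap $v_{n-2}v_\infty \leftrightarrow v_{n-2}v_{n-1}$ before the recursive call, the rest of the argument is a routine structural induction paired with a standard complexity count.
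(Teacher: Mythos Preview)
Your proposal is correct and follows the same approach as the paper. The paper's own justification is essentially just your final paragraph: the recurrences for $R_n$ and $U_n$ do a constant amount of work per call, make a single recursive call on a strictly smaller index, and the required $\spt{j}$ values are precomputed; hence $O(n)$ time and $O(n)$ space. The correctness of the recurrences is not argued formally in the paper beyond the structural discussion preceding them, so your inductive verification plan (including the $varEdge$ edge-swap fix for S3) is actually more thorough than what appears there.
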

 
 This answers {\bf Research Question \#2} for fan graphs.


\section{Conclusion}

We answer each of the three Research Questions outlined in Section~\ref{sec:intro} for the
fan graph, $F_n$.  First, we discovered a greedy algorithm that exhaustively listed all spanning trees of $F_n$ experimentally for small $n$ with an easy to define starting tree.  
We then studied this listings which led to a recursive construction producing the same listing that runs in $O(1)$-amortized time using $O(n)$ space.  We also proved that the greedy algorithm does in fact exhaustively list all spanning trees of $F_n$ for all $n\geq 2$, by demonstrating the listing is equivalent to the aforementioned recursive algorithm.  It is the first greedy algorithm known to exhaustively list all spanning trees for a non-trivial class of graphs. Finally, we provided an $O(n)$ time ranking and unranking algorithms for our listings, assuming the unit cost RAM model. It remains an interesting open problem to answer the research questions for other classes of graphs including the wheel, $n$-cube, and complete graph.


\bibliographystyle{splncs04}
\bibliography{FinalConferencePaper}
\normalsize

\end{document}